\newtheorem{defn}{Definition}
\newtheorem{lemma}{Lemma}
\newtheorem{prop}{Proposition}
\newtheorem{thm}{Theorem}
\theoremstyle{definition}
\title{\bf Sobolev spaces on locally compact abelian groups}
\author{ Przemys{\l}aw G\'{o}rka$^1$ and Enrique G. Reyes$^2$ \\
\small{$^1$Instituto de Matem\'atica y F\'isica, Universidad de Talca}\\
\small{ Casilla 747, Talca, Chile, and}\\
\small{Department of Mathematics and Information Sciences,}\\
\small{Warsaw University of Technology,}\\
\small{Pl. Politechniki 1, 00-661 Warsaw, Poland.} \\
\small{$^{2}$ Departamento de Matem\'atica y Ciencia de la
Computaci\'on,} \\
\small{ Universidad de Santiago de Chile }\\
\small{Casilla 307 Correo 2, Santiago, Chile. }}
\begin{document}

\maketitle
\begin{abstract}
Motivated by a class of nonlinear equations of interest for string theory,
we introduce Sobolev spaces on arbitrary locally compact abelian groups
and we examine some of their properties. Specifically, we focus on analogs of the Sobolev
embedding and Rellich-Kondrachov compactness theorems. As an application, we prove the existence
of continuous solutions to a generalized euclidean bosonic string equation possed on an arbitrary
compact abelian group.
\end{abstract}

{\bf Keywords}: Abstract harmonic analysis; Sobolev spaces; embedding theorems; locally compact groups;
nonlinear pseudodifferential equations.

{\bf AMS Classification}: 43A15; 43A25; 46E35.
\date
\section{Introduction}

In this work we introduce Sobolev spaces on arbitrary locally compact abelian groups
and we examine analogs to the Sobolev embedding and Rellich-Kondrachov
compactness theorems. Sobolev spaces are well understood on (domains of) $\mathbb{R}^n$, see \cite{adams},
compact and complete Riemannian manifolds \cite{aubin,hebey}, and metric measure
spaces (the so-called Hajlasz-Sobolev spaces, see \cite{H1,H2}, and Newtonian spaces \cite{S}).
There are also some works on Sobolev spaces in the $p$-adic context, see \cite{RV} and references
therein, and in special cases of locally compact groups such as the Heisenberg group \cite{BA}.

Besides their intrinsic interest, we are interested in Sobolev spaces in this general context because we wish to consider some nonlinear equations
appearing in physical theories \cite{BK1,GPR_CAOT,GPR_Lorentz,V,W} in
settings beyond Riemannian manifolds, and also because we wish to use them as a tool for a better understanding
of pseudodifferential operators defined on locally compact abelian groups, motivated by the papers \cite{RV}
and \cite{GS}, and also by the recent treatise \cite{RT} in which the authors study in detail pseudo-differential
operators on compact Lie groups. As a first application of our Sobolev-type theorems, we
investigate the existence of continuous solutions to the {\em generalized  euclidean bosonic string equation}
\begin{equation} \label{abs}
\Delta e^{-c\Delta}\phi = U(x,\phi)\; , \; \; \; c > 0 \; ,
\end{equation}
introduced in \cite{CMN} (in a Lorentzian context) and recently considered in \cite{GPR_CAOT,GPR_JDE}.
We stress that studying such an equation in the general setting of topological groups is not simply a
technical exercise. Equation (\ref{abs}) is obtained formally in \cite{CMN} as the Euler-Lagrange equation
of the ``nonlocal Lagrangian''
\begin{eqnarray}\label{jeden}
    \mathcal{L}(\phi) = \phi\, \Delta e^{-c\, \Delta}\phi - \mathcal{U}(x,\phi) \; , \; \; \; c > 0\; ,
\end{eqnarray}
and this Lagrangian is but an approximation to the highly sophisticated bosonic string action considered
in \cite{W} which contains an infinite number of fields and yields ---via a formal application of the
variational principle--- an infinite number of equations for infinitely many variables, see for instance
\cite{Ra}. Topological groups appear therefore as a natural testing ground for gathering a better
understanding of (\ref{abs}) and (\ref{jeden}). For instance, we can consider Equation (\ref{abs}) for
functions $\phi$ ``depending on an infinite number of variables'', if we pose it on an infinite product of
spheres.

We start with some standard notation from harmonic analysis \cite{hr1,hr2}.
Let us fix a locally compact abelian group $G$. We denote by $\mu_G$ the unique Haar measure of $G$.
We also consider the {\em dual group} of the group $G$ (that is, the locally compact abelian group of all
continuous group homomorphisms from $G$ to the circle group $T$), and we
denote it by $G^{\wedge}$. $L^p$ spaces over $G$ are defined as usual,
\[
 L^p_{\mu_G}(G) = \left\{ f : G \rightarrow {\mathbb C} \; : \; \int_G |f(x)|^p d\mu_G(x) < \infty \right\} \; ,
\]
and the Fourier transform on $G$ is defined as follows: if $f \in L^1_{\mu_G}(G)$, then it Fourier
transform is the function $\hat{f} : G^\wedge \rightarrow \mathbb{C}$ given by
\begin{eqnarray*}
    \hat{f}(\xi) = \int_G \overline{\xi(x)} f(x) d\mu_G(x) \; .
\end{eqnarray*}
Next, we denote by $\Gamma$ the following set
\begin{eqnarray*}
    \Gamma=
\left\{ \gamma : G^{\wedge} \rightarrow [0,\infty) :
\exists_{\,c_{\gamma}}\, \forall_{\alpha, \beta \in G^{\wedge}}\,\,
\gamma(\alpha \beta)\leq c_{\gamma}\left[\, \gamma(\alpha)+\gamma(\beta)\,\right]\right\} \; ,
\end{eqnarray*}
and we are in position to introduce Sobolev spaces:

\begin{defn} \label{sob}
    Let us fix a map $\gamma\in\Gamma$ and a nonnegative real number $s$. We shall say that
    $f\in L^2_{\mu_G}(G)$ belongs to $H^s_{\gamma}(G)$ if the following integral is finite:
     \begin{eqnarray} \label{int1}
    \int_{G^{\wedge}} \left(1+\gamma(\xi)^2\right)^s |\hat{f}(\xi)|^2 d\mu_{G^{\wedge}}(\xi) \; .
    \end{eqnarray}
    Moreover, for $f\in H^s_{\gamma}(G)$ its norm $\|f\|_{ H^s_{\gamma}(G)}$ is defined as follows:
    \begin{eqnarray} \label{int2}
    \|f\|_{ H^s_{\gamma}(G)} =
\left(\int_{G^{\wedge}} \left(1+\gamma(\xi)^2\right)^s |\hat{f}(\xi)|^2 d\mu_{G^{\wedge}}(\xi)\right)^{\frac{1}{2}}.
    \end{eqnarray}
\end{defn}

{\bf Remark:} We note that by taking appropriate functions $\gamma$ we obtain the classical Sobolev spaces
on $\mathbb{T}^n$ and $\mathbb{R}^n$, see \cite{FeiWe} and \cite[Chp. 4]{Tay}. The use of the Fourier
transform and the duality theory of locally compact abelian groups is crucial in the present general
context, since we do not have differential calculus at our disposal. A particular instance of Definition
\ref{sob} appears in the paper \cite{FeiWe} by H. G. Feichtinger and T. Werther. The function $\gamma$
used in that article is called by their authors a {\em weakly subadditive weight}. We also note that in
$p$-adic analysis Sobolev spaces are defined in a way analogous to our Definition \ref{sob}: if we take
$\gamma(\xi)=\|\xi\|_p$, where $\|.\|_p$ is a $p$-adic norm on
$\mathbb{Q}_p^n \simeq\mathbb{Q}_p^{n\wedge}$, then (\ref{int1}) and (\ref{int2}) allow us to
recover the $p$-adic Sobolev spaces considered in \cite{RV}.

\vspace{2mm}

{\bf Remark:} It is important to take $\gamma \in \Gamma$ in order to prove that ---as it happens in
standard contexts, see \cite{FeiWe,Tay}--- our spaces $ H^s_{\gamma}(G)$ are Banach algebras under some
assumptions on $s$. We show this fact in Theorem 2 below. The other theorems of Sections 2 and 3 of this
paper hold true without this assumption on $\gamma$.

\vspace{2mm}

{\bf Remark:} Our spaces $H^s_{\gamma}(G)$ are contained in the $A^p_{{\rm w}, \omega}(G)$ spaces
introduced by H.G. Feichtinger and A.T. G\"urkanli in \cite{FG} in the following sense: if (notation
as in \cite{FG}) ${\rm w} \in L^2_{\mu_G}(G)$  and we take $\omega = (1 + \gamma^2)^{s/2}$, then,
$H^s_{\gamma}(G) \hookrightarrow A^2_{{\rm w}, \omega}(G)$.

\section{Continuous embedding theorems}
Embedding properties of Sobolev spaces are essential for proving existence and regularity
of solutions to partial differential equations \cite{Tay} and for the analysis of pseudo-differential
operators, see for instance \cite{RV}.
Thus, we begin by proving a Sobolev embedding type theorem in our general setting.

Let us start with two elementary observations. First, we show in Proposition 1 that our spaces
$H^s_{\gamma}(G)$ are included in $L^2_{\mu_G}(G)$. Then, we prove in Proposition 2 that in fact we have a
``scale'' of spaces:
\begin{prop}
    If $G$ is a locally compact abelian group then,
     \begin{eqnarray*}
     H^s_{\gamma}(G) \hookrightarrow L^2_{\mu_G}(G) \; .
    \end{eqnarray*}
Moreover, for each $f \in H^s_{\gamma}(G)$ the following inequality holds:
     \begin{eqnarray*}
   \|f\|_{L^2_{\mu_G}(G)} \leq \|f\|_{ H^s_{\gamma}(G)} \; .
    \end{eqnarray*}
\end{prop}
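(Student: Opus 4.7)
The plan is to reduce everything to the Plancherel theorem for locally compact abelian groups and then exploit the pointwise inequality $(1+\gamma(\xi)^2)^s \geq 1$ coming from the hypotheses $\gamma(\xi) \geq 0$ and $s \geq 0$.

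First, I would invoke the Plancherel theorem in the form valid for arbitrary locally compact abelian groups: with the Haar measure $\mu_{G^\wedge}$ on the dual suitably normalized (this is the standard convention fixed once and for all by Pontryagin duality, as in \cite{hr1,hr2}), the Fourier transform extends to an isometry $L^2_{\mu_G}(G) \to L^2_{\mu_{G^\wedge}}(G^\wedge)$, so that
\begin{equation*}
\|f\|_{L^2_{\mu_G}(G)}^2 = \int_{G^\wedge} |\hat{f}(\xi)|^2\, d\mu_{G^\wedge}(\xi).
\end{equation*}

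Next, since $\gamma$ takes values in $[0,\infty)$ and $s \geq 0$, one has $(1 + \gamma(\xi)^2)^s \geq 1$ for every $\xi \in G^\wedge$. Multiplying $|\hat{f}(\xi)|^2$ by this factor and integrating yields
\begin{equation*}
\int_{G^\wedge} |\hat{f}(\xi)|^2\, d\mu_{G^\wedge}(\xi) \leq \int_{G^\wedge} (1+\gamma(\xi)^2)^s |\hat{f}(\xi)|^2\, d\mu_{G^\wedge}(\xi) = \|f\|_{H^s_\gamma(G)}^2.
\end{equation*}
Combining the two displays gives the claimed pointwise norm inequality, and hence the continuous inclusion $H^s_\gamma(G) \hookrightarrow L^2_{\mu_G}(G)$.

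There is no real obstacle here: the statement is essentially a tautology once Plancherel is in place, and both ingredients (Plancherel and $(1+\gamma^2)^s \geq 1$) are immediate. The only point requiring care is a bookkeeping remark that $f$ is \emph{assumed} to lie in $L^2_{\mu_G}(G)$ by Definition~\ref{sob}, so $\hat{f}$ is well-defined as an element of $L^2_{\mu_{G^\wedge}}(G^\wedge)$ via the Plancherel extension, and the integral appearing in the definition of $\|f\|_{H^s_\gamma(G)}$ is unambiguously an integral of a nonnegative measurable function on $G^\wedge$.
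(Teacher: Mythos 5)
Your proof is correct and follows essentially the same route as the paper: the Plancherel isometry (what the paper calls ``Pontriagin duality'') combined with the pointwise bound $(1+\gamma(\xi)^2)^s \geq 1$. The extra remarks on normalization of $\mu_{G^\wedge}$ and on $\hat f$ being defined via the $L^2$ extension are sensible bookkeeping that the paper leaves implicit.
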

\begin{proof}
By Pontriagin duality and a basic inequality we get
\begin{eqnarray*}
   \|f\|_{L^2_{\mu_G}(G)} = \|\hat{f}\|_{L^2(G^{\wedge})} =
\left(\int_{G^{\wedge}} |\hat{f}(\xi)|^2 d\mu_{G^{\wedge}}(\xi)\right)^{\frac{1}{2}}\leq  \\
\left(\int_{G^{\wedge}} \left(1+\gamma(\xi)^2\right)^s |\hat{f}(\xi)|^2 d\mu_{G^{\wedge}}(\xi)\right)^{\frac{1}{2}} =\|f\|_{ H^s_{\gamma}(G)} \; .
    \end{eqnarray*}
\end{proof}

\begin{prop}
If $s > \sigma$, then $H^s_{\gamma}(G) \hookrightarrow H^{\sigma}_{\gamma}(G)$. Moreover, the inequality
  \begin{eqnarray*}
  \|f\|_{ H^{\sigma}_{\gamma}(G)} \leq \|f\|_{ H^s_{\gamma}(G)}
    \end{eqnarray*}
  holds.
\end{prop}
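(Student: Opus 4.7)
The plan is to reduce the inclusion to a pointwise comparison of the two weight functions appearing in the defining integrals. Since $\gamma$ takes values in $[0,\infty)$, we have $1+\gamma(\xi)^2 \geq 1$ for every $\xi \in G^\wedge$. The elementary fact that, for any real number $a \geq 1$, the map $t \mapsto a^t$ is nondecreasing, then yields
\begin{equation*}
(1+\gamma(\xi)^2)^{\sigma} \leq (1+\gamma(\xi)^2)^{s}
\end{equation*}
pointwise on $G^\wedge$, since $s > \sigma$.

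With this in hand, the rest is immediate. Multiplying both sides of the pointwise inequality by $|\hat{f}(\xi)|^2 \geq 0$ and integrating with respect to the Haar measure $\mu_{G^\wedge}$ on the dual group gives
\begin{equation*}
\int_{G^\wedge} (1+\gamma(\xi)^2)^{\sigma}|\hat{f}(\xi)|^2\, d\mu_{G^\wedge}(\xi) \leq \int_{G^\wedge} (1+\gamma(\xi)^2)^{s}|\hat{f}(\xi)|^2\, d\mu_{G^\wedge}(\xi).
\end{equation*}
In particular, if $f \in H^s_\gamma(G)$ then the right-hand side is finite, so the left-hand side is finite as well, meaning $f \in H^\sigma_\gamma(G)$. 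Taking square roots produces the stated norm inequality $\|f\|_{H^\sigma_\gamma(G)} \leq \|f\|_{H^s_\gamma(G)}$, which also establishes continuity of the embedding.

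There is really no obstacle here; this is the exact analogue of Proposition 1, with the constant weight $1$ replaced by $(1+\gamma^2)^{\sigma}$. The only mild subtlety is checking that the exponential inequality goes the right way, which is why it matters that $\gamma$ is nonnegative (so that the base exceeds $1$); if the base could lie in $(0,1)$, the inequality would reverse. No use of the subadditivity condition encoded in $\Gamma$ is needed at this stage, consistently with the remark following Definition \ref{sob} that only Theorem~2 (the Banach algebra property) requires $\gamma \in \Gamma$ in an essential way.
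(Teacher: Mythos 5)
Your proof is correct and is precisely the ``elementary inequality'' argument that the paper's one-line proof alludes to: the pointwise bound $(1+\gamma(\xi)^2)^{\sigma} \leq (1+\gamma(\xi)^2)^{s}$, valid because the base is at least $1$, followed by multiplication by $|\hat{f}(\xi)|^2$ and integration. Your added remarks on why the nonnegativity of $\gamma$ matters and why membership in $\Gamma$ is not needed are accurate and consistent with the paper.
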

\begin{proof}
The proof follows from an elementary inequality.
\end{proof}

The classical Sobolev embedding theorem, see for instance \cite{adams}, reads in our context as follows:

\begin{thm}
    If $\frac{1}{\left(1+\gamma(.)^2\right)^s} \in L^{1}(G^{\wedge})$, then
     \begin{eqnarray*}
     H^s_{\gamma}(G) \hookrightarrow C(G) \; ,
    \end{eqnarray*}
in which $C(G)$ denotes the space of continuous complex-valued functions on $G$.
Moreover, there exists a constant $C(\gamma,s)$  such that for each $f \in H^s_{\gamma}(G)$, the
following inequality holds:
     \begin{eqnarray*}
   \|f\|_{C(G)} \leq C(\gamma,s)\|f\|_{ H^s_{\gamma}(G)}.
    \end{eqnarray*}
\end{thm}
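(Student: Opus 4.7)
The plan is to exploit the integrability hypothesis on $(1+\gamma(\cdot)^2)^{-s}$ to show first that $\hat f \in L^1(G^\wedge)$, and then use Fourier inversion to produce a continuous representative of $f$ with the desired sup-norm bound.

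Concretely, the first step is to split
\[
|\hat f(\xi)| = \left(1+\gamma(\xi)^2\right)^{-s/2} \cdot \left(1+\gamma(\xi)^2\right)^{s/2} |\hat f(\xi)|
\]
and apply the Cauchy--Schwarz inequality on $G^\wedge$. This yields
\[
\|\hat f\|_{L^1(G^\wedge)} \leq \left(\int_{G^\wedge} \frac{d\mu_{G^\wedge}(\xi)}{\left(1+\gamma(\xi)^2\right)^{s}}\right)^{1/2} \|f\|_{H^s_\gamma(G)},
\]
so $\hat f$ lies in $L^1(G^\wedge)$ and its norm there is controlled by $C(\gamma,s)\,\|f\|_{H^s_\gamma(G)}$ with $C(\gamma,s) := \|(1+\gamma^2)^{-s}\|_{L^1(G^\wedge)}^{1/2}$, which is finite by hypothesis.

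The second step is to apply the Fourier inversion theorem on the locally compact abelian group $G$ (with Haar measures normalized so that Pontryagin duality holds, as in \cite{hr1,hr2}). Since $\hat f \in L^1(G^\wedge)$, the function
\[
\tilde f(x) := \int_{G^\wedge} \xi(x)\, \hat f(\xi)\, d\mu_{G^\wedge}(\xi)
\]
is well defined for every $x \in G$, agrees with $f$ as an element of $L^2_{\mu_G}(G)$ (by Plancherel and inversion), and is continuous on $G$ by dominated convergence applied to the continuous characters $x\mapsto \xi(x)$, dominated by $|\hat f|\in L^1(G^\wedge)$. The pointwise bound $|\tilde f(x)| \leq \|\hat f\|_{L^1(G^\wedge)}$ together with the estimate above then gives $\|f\|_{C(G)} \leq C(\gamma,s)\,\|f\|_{H^s_\gamma(G)}$.

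The only real subtlety, and the step I expect to require a line of justification, is the identification of $f$ (an equivalence class in $L^2_{\mu_G}(G)$) with the genuinely continuous function $\tilde f$; this is not automatic but follows from the Fourier inversion formula for $L^1 \cap L^2$ functions on locally compact abelian groups, which also fixes the correct normalization of $\mu_{G^\wedge}$. Everything else is just Cauchy--Schwarz and dominated convergence; note that neither Proposition 1 nor Proposition 2, nor the subadditivity condition $\gamma \in \Gamma$, is needed for this argument.
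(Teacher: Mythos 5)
Your argument is correct and is essentially the paper's own proof: the same Cauchy--Schwarz splitting of $|\hat f(\xi)|$ against $(1+\gamma(\xi)^2)^{\pm s/2}$ to get $\|\hat f\|_{L^1(G^\wedge)} \leq \|(1+\gamma(\cdot)^2)^{-s}\|_{L^1(G^\wedge)}^{1/2}\|f\|_{H^s_\gamma(G)}$, followed by Fourier inversion to obtain a continuous representative bounded by $\|\hat f\|_{L^1(G^\wedge)}$ (the paper cites Hewitt--Ross, Theorem 31.5, for exactly the identification step you flag as the one subtlety). Your added remarks on normalization of Haar measures and on why neither Proposition 1 nor $\gamma\in\Gamma$ is needed are accurate but do not change the substance.
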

\begin{proof}
Using the formula for the inverse Fourier transform,
\[
f(x) = \int_{\widehat{G}} \widehat f(\xi)\xi(x)\;d\mu_{G^{\wedge}}(\xi) \; ,
\]
we get
\begin{eqnarray*}
    |f(x)| = \left|\int_{G^{\wedge}} \hat{f}(\xi) \xi(x) d\mu_{G^{\wedge}}(\xi)\right|\leq \int_{G^{\wedge}} \left|\hat{f}(\xi) \right| d\mu_{G^{\wedge}}(\xi)\\
    \leq \left(\int_{G^{\wedge}} \left(1+\gamma(\xi)^2\right)^s |\hat{f}(\xi)|^2 d\mu_{G^{\wedge}}(\xi)\right)^{\frac{1}{2}}
    \left\|\frac{1}{\left(1+\gamma(.)^2 \right)^s} \right\|^{\frac{1}{2}}_{L^1(G^{\wedge})}.
\end{eqnarray*}
Finally, since $\hat{f} \in L^1(G^{\wedge})$ we get that $f \in  C(G)$ (see \cite{hr2}, Theorem 31.5).
\end{proof}

The following theorem tells us that, under a technical assumption involving the exponent $s$ and our
function $\gamma$, the space $H^s_{\gamma}(G)$ is a Banach algebra. It is well-known that such a property
is important for instance, for the study of existence of solutions to partial differential equations.
A recent example appears in our paper \cite{GR}.

\begin{thm}
    If $\frac{1}{\left(1+\gamma(.)^2\right)^s} \in L^{1}(G^{\wedge})$, then $H^s_{\gamma}(G)$ is a Banach
    algebra. Moreover, there exists a constant $D(\gamma,s)$ such that for each $f,g \in H^s_{\gamma}(G)$,
    the following inequality holds
     \begin{eqnarray*}
   \|fg\|_{ H^s_{\gamma}(G)} \leq D(\gamma,s)\|f\|_{ H^s_{\gamma}(G)}\|g\|_{ H^s_{\gamma}(G)} \; .
    \end{eqnarray*}
\end{thm}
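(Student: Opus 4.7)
The plan is to show the algebra inequality, since completeness of $H^s_\gamma(G)$ follows essentially for free: the map $f \mapsto (1+\gamma^2)^{s/2}\widehat f$ is a linear isometry onto $L^2(G^\wedge)$, so $H^s_\gamma(G)$ is a Hilbert space, hence Banach. The real content is to bound $\|fg\|_{H^s_\gamma(G)}$ by $\|f\|_{H^s_\gamma(G)}\|g\|_{H^s_\gamma(G)}$, and the obvious starting point is that the Fourier transform sends pointwise product to convolution on the dual group, so $\widehat{fg} = \widehat f * \widehat g$.

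The first key step is a Peetre-type inequality extracted from the defining property of $\Gamma$. Writing $\xi = \alpha\cdot(\alpha^{-1}\xi)$ and applying $\gamma(\alpha\beta)\le c_\gamma[\gamma(\alpha)+\gamma(\beta)]$, one gets
\[
1+\gamma(\xi)^2 \;\le\; K_1\bigl[(1+\gamma(\alpha)^2) + (1+\gamma(\alpha^{-1}\xi)^2)\bigr]
\]
with $K_1$ depending only on $c_\gamma$, and then raising to the power $s/2$ (splitting into the cases $s\le 1$ and $s>1$ via the standard inequality $(a+b)^t\le 2^{(t-1)_+}(a^t+b^t)$) yields a constant $K=K(\gamma,s)$ such that
\[
(1+\gamma(\xi)^2)^{s/2} \;\le\; K\bigl[(1+\gamma(\alpha)^2)^{s/2} + (1+\gamma(\alpha^{-1}\xi)^2)^{s/2}\bigr].
\]
This is exactly the inequality that makes the $\Gamma$-condition indispensable, and is the main technical obstacle of the proof; everything else is bookkeeping.

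Inserting this bound inside $\widehat{fg}(\xi)=\int_{G^\wedge}\widehat f(\alpha)\widehat g(\alpha^{-1}\xi)\,d\mu_{G^\wedge}(\alpha)$ and splitting the integral produces the pointwise estimate
\[
(1+\gamma(\xi)^2)^{s/2}|\widehat{fg}(\xi)| \;\le\; K\bigl[(\Phi_f * |\widehat g|)(\xi) + (|\widehat f|*\Phi_g)(\xi)\bigr],
\]
where $\Phi_h(\eta):=(1+\gamma(\eta)^2)^{s/2}|\widehat h(\eta)|$. Taking $L^2(G^\wedge)$ norms and applying Young's convolution inequality $\|A*B\|_{L^2}\le\|A\|_{L^2}\|B\|_{L^1}$ gives
\[
\|fg\|_{H^s_\gamma(G)} \;\le\; K\bigl(\|\Phi_f\|_{L^2}\|\widehat g\|_{L^1} + \|\widehat f\|_{L^1}\|\Phi_g\|_{L^2}\bigr),
\]
and $\|\Phi_h\|_{L^2(G^\wedge)}=\|h\|_{H^s_\gamma(G)}$ by definition.

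It remains to absorb the two $L^1$ norms. Here the hypothesis $(1+\gamma^2)^{-s}\in L^1(G^\wedge)$ enters, exactly as in the proof of Theorem 1: by Cauchy--Schwarz,
\[
\|\widehat h\|_{L^1(G^\wedge)} \;\le\; \left\|\frac{1}{(1+\gamma^2)^s}\right\|_{L^1(G^\wedge)}^{1/2}\|h\|_{H^s_\gamma(G)}.
\]
Combining the two estimates yields the desired inequality with $D(\gamma,s) = 2K(\gamma,s)\,\bigl\|(1+\gamma^2)^{-s}\bigr\|_{L^1}^{1/2}$, which proves that pointwise multiplication is a continuous bilinear map on $H^s_\gamma(G)\times H^s_\gamma(G)$, and hence that $H^s_\gamma(G)$ is a Banach algebra.
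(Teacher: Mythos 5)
Your proposal is correct and follows essentially the same route as the paper: the Peetre-type inequality derived from the defining property of $\Gamma$, the splitting of $\widehat{fg}=\widehat f*\widehat g$ into two weighted convolutions, Young's inequality $\|A*B\|_{L^2}\le \|A\|_{L^2}\|B\|_{L^1}$, and finally the Cauchy--Schwarz absorption of $\|\widehat h\|_{L^1}$ using $(1+\gamma^2)^{-s}\in L^1(G^\wedge)$, exactly as in the paper's Theorem 1. Your explicit remarks on completeness (via the isometry with $L^2(G^\wedge)$) and on the case split $(a+b)^t\le 2^{(t-1)_+}(a^t+b^t)$ are small refinements the paper leaves implicit.
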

\begin{proof}
First of all let us notice that for each $\xi, \eta \in G^{\wedge}$ the following inequality holds
\begin{eqnarray*}
(1+\gamma(\xi)^2)\leq (2+2c_{\gamma}^2)(2 + \gamma(\xi \eta^{-1})^2+\gamma(\eta)^2) \; .
\end{eqnarray*}
Hence, we obtain
\begin{eqnarray*}
    (1+\gamma(\xi)^2)^{\frac{s}{2}} \widehat{fg}(\xi) =\int_{G^{\wedge}} (1+\gamma(\xi)^2)^{\frac{s}{2}} \widehat{f}(\xi\eta^{-1})\widehat{g}(\eta)\,d\mu_{G^{\wedge}}(\eta)\leq\\
    2^{\frac{s}{2}}(1+c_{\gamma}^2)^{\frac{s}{2}} \int_{G^{\wedge}} \left(1+\gamma(\xi\eta^{-1})^2 + 1+\gamma(\eta)^2\right)^{\frac{s}{2}} |\widehat{f}(\xi\eta^{-1})\widehat{g}(\eta)|\,d\mu_{G^{\wedge}}(\eta)\leq\\
    2^{s-1}(1+c_{\gamma}^2)^{\frac{s}{2}} \int_{G^{\wedge}} \left(1+\gamma(\xi\eta^{-1})^2 \right)^{\frac{s}{2}} |\widehat{f}(\xi\eta^{-1})\widehat{g}(\eta)|\,d\mu_{G^{\wedge}}(\eta)+\\
    2^{s-1}(1+c_{\gamma}^2)^{\frac{s}{2}} \int_{G^{\wedge}} \left(1+\gamma(\eta)^2\right)^{\frac{s}{2}} |\widehat{f}(\xi\eta^{-1})\widehat{g}(\eta)|\,d\mu_{G^{\wedge}}(\eta) =\\
    2^{s-1}(1+c_{\gamma}^2)^{\frac{s}{2}}\left((|(1+\gamma(.)^2)^{\frac{s}{2}} \widehat{f})\ast \widehat{g}| +
     |\widehat{f}\ast(\widehat{g}(1+\gamma(.)^2)^{\frac{s}{2}})|\right).
\end{eqnarray*}
Thus
\begin{eqnarray*}
   \|fg\|_{ H^s_{\gamma}(G)}^2=\|(1+\gamma(\xi)^2)^{\frac{s}{2}} \widehat{fg}(\xi)\|^2_{L^2(G^{\wedge})}\leq\\
     2^{2s-1}(1+c_{\gamma}^2)^{s}\left(\|(1+\gamma(.)^2)^{\frac{s}{2}} \widehat{f})\ast\widehat{g}\|^2_{L^2(G^{\wedge})} +
      \|\widehat{f}\ast(\widehat{g}(1+\gamma(.)^2)^{\frac{s}{2}})\|^2_{L^2(G^{\wedge})}\right).
\end{eqnarray*}
Next, by Young inequality ($\|u\ast v\|_{L^2(G^{\wedge})}\leq c_y
\|u\|_{L^2(G^{\wedge})}\|v\|_{L^1(G^{\wedge})}$), we obtain
\begin{eqnarray*}
   \|fg\|_{ H^s_{\gamma}(G)}^2\leq   2^{2s-1}(1+c_{\gamma}^2)^{s}c^2_y\left(\|f\|^2_{ H^s_{\gamma}(G)} \|\widehat{g}\|^2_{L^1(G^{\wedge})} +
      \|\widehat{f}\|^2_{L^1(G^{\wedge})}\|g \|^2_{H^s_{\gamma}(G)}\right).
\end{eqnarray*}
Finally, from the proof of the previous theorem we can finish the
proof.
\end{proof}

Now we prove a second embedding result. While Theorem 1 tells us that functions in $H^s_{\gamma}(G)$ are
continuous, Theorem 3 tells us that they possess ``higher integrability properties'':

\begin{thm}\label{em}
If $\alpha > s$ and $\frac{1}{\left(1+\gamma(.)^2\right)} \in L^{\alpha}(G^{\wedge})$, then
  \begin{eqnarray*}
     H^s_{\gamma}(G) \hookrightarrow L^{\alpha^{*}}(G) \; ,
  \end{eqnarray*}
where $\alpha^{*}= \frac{2\alpha}{\alpha - s}$. Moreover, there exists a constant $D(\gamma,s)$  such
that for each $f \in H^s_{\gamma}(G)$, the following inequality holds
     \begin{eqnarray*}
   \|f\|_{L^{\alpha^{*}}(G)} \leq D(\gamma,s)\|f\|_{ H^s_{\gamma}(G)} \; .
    \end{eqnarray*}
\end{thm}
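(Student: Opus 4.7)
The plan is to reduce the statement to an $L^q$ bound on $\widehat f$ and then apply the Hausdorff--Young inequality for locally compact abelian groups (in its form coming from Pontryagin duality: if $\widehat f\in L^q(G^\wedge)$ with $1\le q\le 2$, then $f\in L^{q'}(G)$ with $\|f\|_{L^{q'}(G)}\le \|\widehat f\|_{L^q(G^\wedge)}$). The target exponent $\alpha^{*}=\frac{2\alpha}{\alpha-s}$ is precisely the conjugate of $q:=\frac{2\alpha}{\alpha+s}$, and the assumption $\alpha>s\ge 0$ guarantees $1<q\le 2$, so Hausdorff--Young is available.

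The first (and main) step is to establish the estimate
\[
\|\widehat f\|_{L^q(G^\wedge)} \;\le\; \bigl\|(1+\gamma(\cdot)^2)^{-1}\bigr\|_{L^\alpha(G^\wedge)}^{\,s/2}\,\|f\|_{H^s_\gamma(G)} .
\]
To this end I would write, for every $\xi\in G^\wedge$,
\[
|\widehat f(\xi)|^q \;=\; \Bigl(|\widehat f(\xi)|\,(1+\gamma(\xi)^2)^{s/2}\Bigr)^{q}\;\cdot\;(1+\gamma(\xi)^2)^{-sq/2},
\]
and apply H\"older's inequality to the two factors with conjugate exponents $p_1=2/q$ and $p_2=2/(2-q)$. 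The choice $p_1=2/q$ is dictated by the requirement that the first factor produce exactly $\|f\|_{H^s_\gamma(G)}^{\,q}$ after integration. A short computation using $q=\frac{2\alpha}{\alpha+s}$ then shows that the exponent $sqp_2/2$ appearing in the second factor equals precisely $\alpha$, so that the second factor becomes a power of $\|(1+\gamma(\cdot)^2)^{-1}\|_{L^\alpha(G^\wedge)}$, which is finite by hypothesis.

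The second step is routine: take $q$-th roots in the displayed H\"older bound, check that the resulting exponent of $\|(1+\gamma^2)^{-1}\|_{L^\alpha}$ simplifies to $s/2$, and then invoke Hausdorff--Young (\cite{hr2}) to pass from $\|\widehat f\|_{L^q(G^\wedge)}$ to $\|f\|_{L^{q'}(G)}=\|f\|_{L^{\alpha^{*}}(G)}$. The constant $D(\gamma,s)$ is then $\|(1+\gamma(\cdot)^2)^{-1}\|_{L^\alpha(G^\wedge)}^{\,s/2}$ (times the Hausdorff--Young constant, which equals $1$ in the standard normalization of the Haar measures on $G$ and $G^\wedge$ coming from Plancherel).

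I do not expect a serious obstacle: the only slightly delicate point is verifying that the chosen $q$ lies in the admissible range $[1,2]$ for Hausdorff--Young. This follows from $\alpha>s\ge 0$, since then $\alpha\le \alpha+s\le 2\alpha$, giving $1\le q\le 2$. Pontryagin duality in this setting is standard, and the H\"older computation is elementary once the exponents are aligned with the target $\alpha^{*}$.
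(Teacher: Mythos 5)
Your proposal is correct and follows essentially the same route as the paper's own proof: the same Hausdorff--Young reduction with $q=p=\frac{2\alpha}{\alpha+s}$ (the conjugate of $\alpha^{*}$), the same insertion of the weight $(1+\gamma(\cdot)^2)^{sq/2}$ followed by H\"older with exponents $\frac{2}{q}$ and $\frac{2}{2-q}$, and the same resulting constant $\|(1+\gamma(\cdot)^2)^{-1}\|_{L^\alpha(G^\wedge)}^{s/2}$. The exponent bookkeeping, including the verification that $\frac{sq}{2-q}=\alpha$ and that $q\in(1,2]$, all checks out.
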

\begin{proof}
By a standard corollary of Hausdorff-Young inequality (see \cite{hr2}) we have
 \begin{eqnarray*}
   \|f\|_{L^{\alpha^{*}}(G)} \leq \|\hat{f}\|_{ L^p(G^{\wedge})} \; ,
 \end{eqnarray*}
where $p$ is the conjugate of $\alpha^{*}$, i.e. $p = \frac{2\alpha}{\alpha + s}$.  Next, using H\"{o}lder
inequality with exponents $\frac{2}{p}$ and $\frac{2}{2-p}$ we get
\begin{eqnarray*}
   \|\hat{f}\|_{ L^p(G^{\wedge})}= \left(\int_{G^{\wedge}}|\hat{f}(\xi)|^p \frac{\left(1+\gamma(\xi)^2\right)^{\frac{sp}{2}}}{\left(1+\gamma(\xi)^2\right)^{\frac{sp}{2}}}  d\mu_{G^{\wedge}}(\xi)\right)^{\frac{1}{p}}\leq \\\leq \|f\|_{ H^s_{\gamma}(G)}\left(\int_{G^{\wedge}}\frac{1}{\left(1+\gamma(\xi)^2\right)^{\frac{sp}{2-p}}}  d\mu_{G^{\wedge}}(\xi)\right)^{\frac{2-p}{2p}}.
 \end{eqnarray*}
Since $\frac{sp}{2-p}=\alpha$, we get
\begin{eqnarray*}
\|f\|_{L^{\alpha^{*}}(G)} \leq \left\|\frac{1}{\left(1+\gamma(.)^2\right)}\right\|_{L^{\alpha}(G^{\wedge})}^{\frac{s}{2}} \|f\|_{ H^s_{\gamma}(G)} \; .
 \end{eqnarray*}
\end{proof}

\section{Compact embedding theorems}
In this section we prove a Rellich-Kondrachov type theorem. As is well-known, this
theorem plays a crucial role in proving compactness of operators and in fixed point arguments.
Now, the standard Rellich-Kondrachov theorem \cite{adams,hebey} is valid only on
spaces with finite measure such as compact Riemannian manifolds. It is then natural to assume that
in our case the condition $\mu_G(G) < \infty$ holds, or equivalently (see \cite{hr1}), that
the locally compact abelian group $G$ is actually compact. We stress that even with this restriction
our results go beyond the standard case: besides infinite products of basic examples of compact abelian
groups, other interesting instances of compact groups are the dyadic group (see for instance \cite{Tat})
and the compact group appearing in the recent preprint \cite{Ta}.

In the theorem below we use the following convention: $g(h) \rightarrow 0$ as $h \rightarrow e$
means that for all $\epsilon > 0$, there exists an open set $U_\epsilon$ with $e \in U_\epsilon$
such that for all $h \in U_\epsilon$ we have $|g(h)| \leq \epsilon$. Also, the notation
$A \hookrightarrow \hookrightarrow B$ means that the space $A$ is compactly embedded into $B$.

\vspace{2mm}

{\bf Remark:} It is known that in the case of ${\mathbb R}^n$, the Kolmogorov-Riesz-Weil theorem
(see \cite{K} and \cite{We}) can be used to prove the Rellich-Kondrachov theorem.
Similar compactness results exist for locally compact abelian groups, see \cite{Fe}, which
presumably would yield another approach to the problem of compact embeddings. We present
a direct proof.

\vspace{2mm}

\begin{thm}\label{comp}
Let  $\frac{1}{\left(1+\gamma(.)^2\right)} \in L^{\alpha}(G^{\wedge})$ for some $\alpha > s$ and
assume that
\begin{equation}\label{cond}
    \frac{|\xi(h)-1|}{(1+\gamma(\xi)^2)^s} \underset{h\rightarrow e}{\longrightarrow} 0 \quad
\text{uniformly with respect to} \,\xi \in G^{\wedge}\, .
\end{equation}
If $G$ is compact, then for all $p< \alpha^*$,
\begin{eqnarray*}
    H^s_{\gamma}(G) \hookrightarrow \hookrightarrow L^p_{\mu_G}(G) \; .
\end{eqnarray*}
\end{thm}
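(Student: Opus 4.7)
The plan is to first establish relative compactness in $L^2_{\mu_G}(G)$, and then to interpolate with the continuous embedding $H^s_\gamma(G) \hookrightarrow L^{\alpha^*}(G)$ from Theorem \ref{em} in order to cover the entire range $p < \alpha^*$. The ``compactness in $L^2$'' step is where condition (\ref{cond}) enters.

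The heart of the argument is to show that any bounded subset of $H^s_\gamma(G)$ is equi-continuous under translations in the $L^2$-norm. Letting $\tau_h f(x) := f(h^{-1}x)$, a direct computation using Plancherel's identity and the fact that $\widehat{\tau_h f}(\xi) = \overline{\xi(h)}\,\hat{f}(\xi)$ gives
\begin{equation*}
\|\tau_h f - f\|_{L^2_{\mu_G}(G)}^2 = \int_{G^\wedge} |\xi(h)-1|^2\,|\hat f(\xi)|^2\,d\mu_{G^\wedge}(\xi).
\end{equation*}
Combining the trivial bound $|\xi(h)-1| \leq 2$ with the uniform decay (\ref{cond}) yields $|\xi(h)-1|^2 \leq 2\,\varepsilon(h)\,(1+\gamma(\xi)^2)^s$ with $\varepsilon(h) \to 0$ as $h \to e$, so that $\|\tau_h f - f\|_{L^2} \leq \sqrt{2\varepsilon(h)}\,\|f\|_{H^s_\gamma(G)}$, uniformly on any bounded set in $H^s_\gamma(G)$.

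Compactness of $G$ makes the ``tail at infinity'' condition in the Kolmogorov--Riesz--Weil criterion automatic, so together with the uniform $L^2$-bound of Proposition 1, the equi-continuity just established yields relative compactness in $L^2_{\mu_G}(G)$ of any bounded sequence $\{f_n\} \subset H^s_\gamma(G)$. A Fatou argument on the Fourier side shows that the limit $f$ belongs to $H^s_\gamma(G)$ with $\|f\|_{H^s_\gamma} \leq \liminf \|f_n\|_{H^s_\gamma}$.

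To reach $L^p$ for general $p < \alpha^*$: for $p \leq 2$ one uses $L^2_{\mu_G}(G) \hookrightarrow L^p_{\mu_G}(G)$, which is valid since $\mu_G(G) < \infty$; for $2 < p < \alpha^*$ one chooses $\theta \in (0,1)$ with $\frac{1}{p} = \frac{\theta}{2} + \frac{1-\theta}{\alpha^*}$ (possible because $\alpha^* > 2$) and uses the standard log-convexity bound
\begin{equation*}
\|f_n-f\|_{L^p(G)} \leq \|f_n-f\|_{L^2(G)}^\theta\,\|f_n-f\|_{L^{\alpha^*}(G)}^{1-\theta},
\end{equation*}
where the $L^{\alpha^*}$-factor is uniformly bounded by Theorem \ref{em} and the $L^2$-factor tends to zero. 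The main obstacle is the equi-continuity estimate: it is precisely hypothesis (\ref{cond}) that allows one to dominate $|\xi(h)-1|$ by a multiple of $(1+\gamma(\xi)^2)^s$ uniformly in $\xi \in G^\wedge$, which is not available from the algebraic structure of $\gamma \in \Gamma$ alone. Once this is in hand, the remaining steps are standard.
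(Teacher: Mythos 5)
Your proof is correct, and it follows precisely the alternative route that the paper's own remark before Theorem 4 mentions and deliberately declines to take (``We present a direct proof''). The first step is common to both arguments: your translation estimate $\|\tau_h f - f\|_{L^2}^2 = \int_{G^\wedge}|\xi(h)-1|^2|\hat f(\xi)|^2\,d\mu_{G^\wedge}$, controlled by hypothesis (\ref{cond}) together with $|\xi(h)-1|\le 2$, is exactly the paper's Lemma 1. After that the two proofs diverge. The paper stays self-contained: it builds a Dirac net $(\phi_U)$, proves a mollification estimate $\|f\ast\phi_U - f\|_{L^2}^2 \le \|f\|_{H^s_\gamma}^2 \sup_{y\in U}C(y)$ (Lemma 2), extracts a weakly convergent subsequence in $L^{\alpha^*}$, passes weak convergence through the convolution to get pointwise convergence of the mollified functions, assembles strong $L^2$ convergence via a three-term triangle inequality, and finally upgrades to $L^p$, $p<\alpha^*$, with the Vitali convergence theorem. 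You instead feed the equicontinuity estimate directly into the Kolmogorov--Riesz--Weil compactness criterion (the tail condition being vacuous on a compact group) to obtain relative compactness in $L^2$ in one stroke, and then upgrade to $L^p$ by log-convexity of the $L^p$ norms against the uniform $L^{\alpha^*}$ bound of Theorem \ref{em}. Your version is shorter and the interpolation step is arguably cleaner than Vitali, but it outsources the essential compactness machinery to the Weil/Feichtinger criterion, which is precisely the dependence the authors chose to avoid. (Minor remark: the Fatou argument showing $f\in H^s_\gamma(G)$ is harmless but not needed, since compactness of the embedding only requires relative compactness of bounded sets in $L^p$, not that the limit lie in $H^s_\gamma(G)$.)
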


Before proving Theorem \ref{comp} we note
that if $G$ is $\mathbb{R}^n$ or $\mathbb{T}^n$,  Condition (\ref{cond}) is satisfied. Indeed,
if $G=\mathbb{R}^n$, then $G^{\wedge}=\mathbb{R}^n$ and a straightforward calculation
yields
\[
\frac{|\xi(h)-1|}{(1+\gamma(\xi)^2)^s}=\frac{|e^{i \xi h}-1|}{(1+|\xi|^2)^s} \leq |h| \; .
\]
If $G=\mathbb{T}^n$, then  $G^{\wedge}=\mathbb{Z}^n$ and we can show as before that
$\frac{|n(h)-1|}{(1+\gamma(n)^2)^s}=\frac{|e^{i n h}-1|}{(1+|n|^2)^s} \leq |h|$.

\begin{proof}
Let us start with the following lemma.
\begin{lemma}
    Let $f\in H^s_{\gamma}(G)$ and assume that
$\frac{|\xi(h)-1|}{(1+\gamma(\xi)^2)^s} \underset{h\rightarrow e}{\longrightarrow} 0$ uniformly
with respect to $\xi \in G^{\wedge}$. Then, for each $h\in G$
 \begin{eqnarray*}
    \int_G |f(xh)-f(x)|^2d\mu_G(x) \leq C(h) \|f\|^2_{H^s_{\gamma}(G)} \; ,
 \end{eqnarray*}
where $C(h) \underset{h\rightarrow e}{\longrightarrow} 0 $\; .
\end{lemma}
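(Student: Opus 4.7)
The plan is to pass everything to the Fourier side via Plancherel and then exploit the uniform decay hypothesis directly. First I would compute the Fourier transform of the translate $f_h(x) := f(xh)$. Using the translation invariance of the Haar measure and the substitution $y = xh$, together with $\overline{\xi(h^{-1})} = \xi(h)$ (valid since $\xi$ is a character into the circle), I get
\[
\widehat{f_h}(\xi) = \int_G \overline{\xi(x)} f(xh)\, d\mu_G(x) = \xi(h)\, \widehat{f}(\xi) \, .
\]
Hence $\widehat{f_h - f}(\xi) = (\xi(h) - 1)\widehat{f}(\xi)$, and Plancherel's theorem (i.e.\ Pontryagin duality, used already in Proposition 1) gives
\[
\int_G |f(xh) - f(x)|^2\, d\mu_G(x) = \int_{G^\wedge} |\xi(h) - 1|^2\, |\widehat{f}(\xi)|^2\, d\mu_{G^\wedge}(\xi) \, .
\]

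Next I would insert the weight $(1+\gamma(\xi)^2)^s$ by multiplying and dividing, and estimate the resulting supremum. Writing
\[
|\xi(h) - 1|^2\, |\widehat{f}(\xi)|^2 = \frac{|\xi(h)-1|^2}{(1+\gamma(\xi)^2)^s}\,(1+\gamma(\xi)^2)^s |\widehat{f}(\xi)|^2 \, ,
\]
I obtain the bound
\[
\int_G |f(xh) - f(x)|^2\, d\mu_G(x) \leq \left(\sup_{\xi \in G^\wedge} \frac{|\xi(h)-1|^2}{(1+\gamma(\xi)^2)^s}\right) \|f\|^2_{H^s_\gamma(G)} \, .
\]
Since $\xi$ takes values in the unit circle we have the trivial bound $|\xi(h)-1|\leq 2$, so
\[
\frac{|\xi(h)-1|^2}{(1+\gamma(\xi)^2)^s} \leq 2\, \frac{|\xi(h)-1|}{(1+\gamma(\xi)^2)^s} \, .
\]
Therefore setting
\[
C(h) := 2\, \sup_{\xi \in G^\wedge} \frac{|\xi(h)-1|}{(1+\gamma(\xi)^2)^s}
\]
gives the stated inequality, and the uniform assumption (\ref{cond}) on $\xi \in G^\wedge$ is exactly what is needed to guarantee $C(h) \to 0$ as $h \to e$ in the sense of the convention introduced just before the theorem.

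The only subtle point, and the step I would double-check first, is the translation formula for the Fourier transform under the paper's normalization $\widehat{f}(\xi) = \int_G \overline{\xi(x)} f(x)\, d\mu_G(x)$; once that is in hand the rest is a two-line Plancherel computation plus the bound $|\xi(h)-1|^2 \leq 2|\xi(h)-1|$, which is precisely what converts the hypothesis (phrased with a single power of $|\xi(h)-1|$) into control of the squared quantity that arises on the Fourier side. No other ingredients are needed.
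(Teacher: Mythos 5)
Your proposal is correct and follows essentially the same route as the paper: Plancherel plus the translation formula $\widehat{f(\cdot h)}(\xi)=\xi(h)\widehat{f}(\xi)$, then multiplying and dividing by $(1+\gamma(\xi)^2)^s$ and pulling out the supremum. The only difference is that you make explicit the small step $|\xi(h)-1|^2\leq 2|\xi(h)-1|$ needed to pass from the hypothesis (first power) to the squared quantity, which the paper leaves implicit in its definition of $C(h)$.
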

\begin{proof}
By Pontriagin duality we have
\begin{eqnarray*}
    \int_G |f(xh)-f(x)|^2d\mu_G(x) = \int_{G^{\wedge}}|\widehat{f(.h)}(\xi)-\hat{f}(\xi)|^2d\mu_{G^{\wedge}}(\xi) \; .
 \end{eqnarray*}
Since the measure $\mu_G$ is invariant, we obtain
\begin{eqnarray*}
    \widehat{f(.h)}(\xi)=\int_G \overline{\xi(x)} f(xh) d\mu_G(x)=\int_G \overline{\xi(yh^{-1})} f(y) d\mu_G(y)=\\
    \int_G \overline{\xi(y)\xi(h^{-1})} f(y) d\mu_G(y)=\xi(h)\hat{f}(\xi) \; .
\end{eqnarray*}
Hence, we get
\begin{eqnarray*}
    \int_G |f(xh)-f(x)|^2d\mu_G(x) = \int_{G^{\wedge}}|\hat{f}(\xi)|^2 |\xi(h)-1|^2 d\mu_{G^{\wedge}}(\xi)=\\
    \int_{G^{\wedge}}|\hat{f}(\xi)|^2(1+\gamma(\xi)^2)^s \frac{|\xi(h)-1|^2}{(1+\gamma(\xi)^2)^s} d\mu_{G^{\wedge}}(\xi)\leq C(h) \|f\|^2_{H^s_{\gamma}(G)} \; ,
 \end{eqnarray*}
where $C(h)=\|\frac{|\xi(h)-1|^2}{(1+\gamma(\xi)^2)^s}\|_{L^{\infty}(G^{\wedge})}\underset{h\rightarrow e}{\longrightarrow} 0\,$.
\end{proof}
We continue the proof of Theorem \ref{comp}. Let $\mathcal{I}$ be the set of all symmetric
unit-neighborhoods, partially ordered by the inverse inclusion. Then, using Urysohn lemma we can
construct the so-called Dirac net $(\phi_U)_{U\in I}$ in $C_c(G)$ (see \cite{DE}). Each function
$\phi_U$ is nonegative, satisfies $\int_G \phi_U(x) d\mu_G(x)=1$ and the support of $\phi_U$ shrinks.
We are in position to formulate the next lemma:

\begin{lemma}\label{dwa}
    Let $(\phi_U)_{U\in \mathcal{I}}$ be a Dirac net and $f\in H^s_{\gamma}(G)$. Then
    \begin{eqnarray*}
        \int_G |f \ast \phi_U (x)-f(x)|^2d\mu_G(x) \leq  \|f\|^2_{H^s_{\gamma}(G)}\, \sup_{y\in U} C(y) \; .
    \end{eqnarray*}
\end{lemma}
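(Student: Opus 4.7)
The plan is to reduce everything to the previous lemma via the standard convolution trick. Since $\phi_U$ is nonnegative and satisfies $\int_G \phi_U(y)\,d\mu_G(y) = 1$, I would first write
\[
f \ast \phi_U(x) - f(x) = \int_G \bigl[f(xy^{-1}) - f(x)\bigr]\,\phi_U(y)\,d\mu_G(y),
\]
and then apply the Cauchy--Schwarz inequality with weight $\phi_U$ (equivalently, Jensen's inequality for the probability measure $\phi_U\,d\mu_G$) to obtain the pointwise estimate
\[
|f \ast \phi_U(x) - f(x)|^2 \leq \int_G |f(xy^{-1}) - f(x)|^2\,\phi_U(y)\,d\mu_G(y).
\]

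Next, I would integrate both sides in $x$ and apply Fubini's theorem (legitimate since the integrand is nonnegative) to swap the order of integration. The inner integral $\int_G |f(xy^{-1}) - f(x)|^2\,d\mu_G(x)$ is then precisely the translation-difference quantity controlled by the preceding lemma, applied with $h = y^{-1}$, so it is bounded by $C(y^{-1})\|f\|^2_{H^s_{\gamma}(G)}$.

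Finally, since the Dirac net $(\phi_U)$ is constructed so that $\mathrm{supp}\,\phi_U \subseteq U$ and $U$ is symmetric, whenever $\phi_U(y)\neq 0$ both $y$ and $y^{-1}$ belong to $U$; hence $C(y^{-1}) \leq \sup_{z\in U} C(z)$. Pulling this uniform bound out of the $y$-integral and using once more $\int_G \phi_U(y)\,d\mu_G(y) = 1$ yields the claimed inequality. The argument is essentially routine given the previous lemma; the only point requiring any attention is the symmetry of $U$, which is needed precisely to pass from $C(y^{-1})$ to $\sup_{y\in U}C(y)$, and I do not anticipate any serious obstacle.
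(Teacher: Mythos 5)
Your proposal is correct and follows essentially the same route as the paper: Jensen/Cauchy--Schwarz against the probability measure $\phi_U\,d\mu_G$, then Fubini, then the previous lemma, then $\int_G\phi_U\,d\mu_G=1$. The only cosmetic difference is that the paper changes variables $z=y^{-1}x$ and invokes the previous lemma with $h=y$ directly, while you keep $h=y^{-1}$ and use the symmetry of $U$ (one could also note $C(y^{-1})=C(y)$ since $|\xi(y^{-1})-1|=|\overline{\xi(y)}-1|=|\xi(y)-1|$); both are fine.
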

\begin{proof}
\begin{eqnarray*}
        |f \ast \phi_U (x)-f(x)|^2 =\left|\int_G \phi_U(y) f(y^{-1}x)d\mu_G(y) - f(x)\right|^2=\\
        \left|\int_G \phi_U(y)( f(y^{-1}x) - f(x))d\mu_G(y)\right|^2\leq \int_G \phi_U(y)|f(y^{-1}x) - f(x)|^2 d\mu_G(y)=\\
        \int_U \phi_U(y)|f(y^{-1}x) - f(x)|^2 d\mu_G(y)\; .
 \end{eqnarray*}
Hence, by Fubini theorem and the invariance of the measure we get
\begin{eqnarray*}
      \int_G |f \ast \phi_U (x)-f(x)|^2 d\mu_G(x) & \leq & \int_G \int_U \phi_U(y)|f(y^{-1}x) - f(x)|^2 d\mu_G(y) d\mu_G(x) \\
       & = & \int_U \int_G \phi_U(y)|f(y^{-1}x) - f(x)|^2  d\mu_G(x) d\mu_G(y) \\
       & = & \int_U \phi_U(y) \int_G |f(z) - f(yz)|^2  d\mu_G(z) d\mu_G(y) \; .
 \end{eqnarray*}
 By the previous lemma
\begin{eqnarray*}
      \int_G |f \ast \phi_U (x)-f(x)|^2 d\mu_G(x) \leq \int_U \phi_U(y) \|f\|^2_{H^s_{\gamma}(G)} C(y) d\mu_G(y)\leq\\
      \int_U \phi_U(y) d\mu_G(y) \|f\|^2_{H^s_{\gamma}(G)}\sup_{y\in U} C(y) =  \|f\|^2_{H^s_{\gamma}(G)}\sup_{y\in U} C(y) \; .
 \end{eqnarray*}
 This finishes the proof of the lemma.
\end{proof}
Now we can finish the proof of the theorem. Let us take any sequence $f_n$ bounded in the space
$H^s_{\gamma}(G)$, then by Theorem \ref{em} the sequence is bounded in $L^{\alpha^{*}}_{\mu_G}(G)$. Hence,
there exists a subsequence $f_{n_k}$ of $f_n$ such that
\begin{eqnarray*}
      f_{n_k} \rightharpoonup f \quad \text{in} \quad L^{\alpha^{*}}_{\mu_G}(G) \; .
\end{eqnarray*}
We claim that $f_{n_k} \rightarrow f$ in $L^q_{\mu_G}(G)$, where $q< \alpha^*$: for every $f \in L^2_{\mu_G}(G)$ we
denote by $f_{(U)}$ the function $f_{(U)}=f\ast \phi_U$. Also, for simplicity, we write $f_n$ instead of
$f_{n_k}$. By lemma \ref{dwa} we get
 \begin{eqnarray*}
       \sup_{n} \int_G |f_{n_{(U)}}(x)-f_n(x)|^2d\mu_G(x) \leq \sup_{n}  \|f_n\|^2_{H^s_{\gamma}(G)}\sup_{y\in U} C(y)\leq C \sup_{y\in U} C(y)\; .
 \end{eqnarray*}
Moreover, we can show that $\|f_{(U)}-f\|_{L^2_{\mu_G}(G)} \rightarrow 0$ in the sense that for each
$\epsilon>0$ there exists $U_{\epsilon}$ such that for each $U\in \mathcal{I}$, $U \subset U_{\epsilon}$
the inequality holds $\|f_{(U)}-f\|_{L^2_{\mu_G}(G)} \leq \epsilon$. Next, by Minkowski inequality we have
\begin{eqnarray*}
    \|f_n-f\|_{L^2_{\mu_G}(G)}\leq  \|f_n-f_{n_{(U)}}\|_{L^2_{\mu_G}(G)} + \|f_{n_{(U)}}-f_{(U)}\|_{L^2_{\mu_G}(G)} +
\|f_{(U)}-f\|_{L^2_{\mu_G}(G)} \; .
\end{eqnarray*}
Now we fix $\epsilon >0$. There exists $U_{\epsilon} \in I$ such that for each $U\in \mathcal{I}$,
$U \subset U_{\epsilon}$ the following inequality holds
\begin{eqnarray*}
    \|f_n-f\|_{L^2_{\mu_G}(G)}\leq  \frac{2}{3}\epsilon + \|f_{n_{(U)}}-f_{(U)}\|_{L^2_{\mu_G}(G)}.
\end{eqnarray*}
Thus, in order to show that $\|f_n-f\|_{L^2_{\mu_G}(G)} \rightarrow 0$, it is enough to check the limit
$$\|f_{n_{(U_{\epsilon})}}-f_{(U_{\epsilon})}\|_{L^2_{\mu_G}(G)}\underset{n\rightarrow \infty}{\longrightarrow} 0 \; .$$
In fact, since $f_{n} \rightharpoonup f$ in $L^{\alpha^{*}}_{\mu_G}(G)$, we have
\begin{eqnarray*}
    f_{n_{(U_{\epsilon})}} (x)=\int_G \phi_{U_{\epsilon}}(xy^{-1}) f_n(y) d\mu_G(y) \rightarrow \int_G \phi_{U_{\epsilon}}(xy^{-1}) f(y) d\mu_G(y) =f_{(U_{\epsilon})} (x)\; .
\end{eqnarray*}
Moreover, since $G$ is abelian we get
\begin{eqnarray*}
    |f_{n_{(U_{\epsilon})}}-f_{(U_{\epsilon})}|^2 =
\left|\int_G(f_n(y)-f(y))\phi_{U_{\epsilon}}(y^{-1}x)d\mu_G(y)\right|^2 \leq \\
\int_G|f_n(y)-f(y)|\phi_{U_{\epsilon}}(y^{-1}x)d\mu_G(y)\leq \sup_{z \in U_{\epsilon}} \phi_{U_{\epsilon}}(z) \int_G|f_n(y)-f(y)|d\mu_G(y) \; ,
\end{eqnarray*}
and finally, since we are assuming that $G$ is of finite measure, we can apply the Lebesgue theorem and
obtain
\begin{eqnarray*}
    \|f_{n_{(U_{\epsilon})}}-f_{(U_{\epsilon})}\|_{L^2_{\mu_G}(G)}\underset{n\rightarrow \infty}{\longrightarrow} 0 \; .
\end{eqnarray*}
So, we have obtained that $f_{n_k} \rightarrow f$ in $L^2_{\mu_G}(G)$. Finally, since the sequence is
bounded in $L^{\alpha^*}_{\mu_G}(G)$ we can apply Vitali convergence theorem and we obtain that
$f_{n_k} \rightarrow f$ in $L^p_{\mu_G}(G)$, where $p < \alpha^*$.
\end{proof}

\section{An application: the generalized euclidean bosonic string}
We recall that the generalized euclidean bosonic string equation \cite{CMN,GPR_CAOT} is
\begin{equation} \label{abs1}
\Delta e^{-c\Delta}\phi = U(x,\phi)\; , \; \; \; c > 0 \; .
\end{equation}
Classically, Equation (\ref{abs1}) is known as an {\em equation with an infinite number of
derivatives} (see \cite{BK1} and references therein). Such equations have been considered in the
mathematical literature since the 1930's, but only recently physicists have found reasons to study
{\em nonlinear} equations such as (\ref{abs1}), see for instance \cite{BK1,CMN,Ra,V,W}. The existence of
very serious proposals claiming that $p$-adic and non-commutative mathematics are relevant to physics (see
for instance \cite{V,W}), makes it natural, even necessary, to consider equations of physical relevance in
contexts other than Euclidean space or (pseudo-)Riemannian manifolds, as stated in Section 1.

Suppose for a moment that we are working on Euclidean space and that $f$ is the real function
$f(s) = s \, \exp(-c\,s)$ so that, formally, the left hand side of (\ref{abs1}) is $f(\Delta)$.
We expand $f$ as a power series, $f(s) = \sum_{n=0}^\infty \frac{f^{(n)}(0)}{n!}\, s^n$. Then, formally,
we should have
\[
f(\Delta) u = \sum_{n=0}^\infty \frac{f^{(n)}(0)}{n!}\, \Delta^n\,u \; .
\]
Applying Fourier transform we obtain (we set $\widehat{f} = {\mathcal F}(f)$ for clarity)
\begin{eqnarray*}
{\mathcal F}(f(\Delta) u)(\xi) & = & \sum_{n=0}^\infty \frac{f^{(n)}(0)}{n!}{\mathcal F}(\Delta^n\,u) \\
  =  \sum_{n=0}^\infty \frac{f^{(n)}(0)}{n!} (- |\xi|^2)^n {\mathcal F}(u)
 & = & f(- |\xi|^2)\, {\mathcal F}(u)(\xi) \; ,
\end{eqnarray*}
so that, naturally, we may interpret $f(\Delta) u$ in a way that reminds us of the classical definitions of
pseudo-differential operators, see for instance \cite{RT}, as
\begin{equation} \label{pdo}
f(\Delta) u = {\mathcal F}^{-1} \left(\, f(- |\xi|^2)\, {\mathcal F}(u)(\xi) \,\right) = - {\mathcal F}^{-1} \left(\, |\xi|^2 e^{c|\xi|^2}\, {\mathcal F}(u)(\xi) \,\right) \; .
\end{equation}

Motivated by these remarks, we make two definitions. First, we write down the correct domain for the
operator $L_c =  \Delta e^{-c\Delta} - Id$. Then, we define the action of $L_c\,$:

\begin{defn}
The space $\mathcal{H}^{c,\infty}(G)$, $c >0$, is given by
\begin{eqnarray*}
\mathcal{H}^{c,\infty}(G) = \left\{ f \in L^2_{\mu_G}(G) :
\int_{G^{\wedge}} \left( 1 +\gamma(\xi)^2 e^{c \gamma(\xi)^2}
\right)^2 |\widehat{f}(\xi)|^2\, d\mu_{G^{\wedge}}(\xi) < \infty
\right\}.
\end{eqnarray*}
\end{defn}

\begin{defn} \label{def1}
The operator \, $L_c =  \Delta e^{-c\Delta} - Id$ is defined as
\begin{equation} \label{h7}
L_c u = -\mathcal{F}^{-1}\left(\mathcal{F}(u)+\gamma(\xi)^2 e^{c \gamma(\xi)^2} \mathcal{F}(u)\right) \; ,
\end{equation}
for any $u \in \mathcal{H}^{c,\infty}(G)$.
\end{defn}

We note that $L_c $ is an isometry from $\mathcal{H}^{c,\infty}(G)$ into $L^2_{\mu_G}(G)$; we also remark
that analogous definitions of pseudo-differential operators appear in $p$-adic analysis, see for
instance \cite{RV}.

We state two important technical observations on the structure of the space $\mathcal{H}^{c,\infty}(G)$:

\begin{lemma}\label{emb}
\begin{enumerate}
\item For each non-negative $s \in {\mathbb R}$ the embedding
      $\mathcal{H}^{c,\infty}(G) \hookrightarrow H^s_\gamma(G)$ holds.
      In other words,
      $\| f \|_{H^s_\gamma(G)} \leq C(s) \| f \|_{\mathcal{H}^{c,\infty}(G)}$
      for some constant $C(s) > 0$.
\item Assume that $\frac{1}{(1+\gamma(\cdot)^2)^2} \in L^1(G^\wedge)$. Then, the embedding
      $\mathcal{H}^{c,\infty}(G) \hookrightarrow C(G)$ holds.
\end{enumerate}
\end{lemma}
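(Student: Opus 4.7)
Both assertions follow from a single pointwise comparison between the weights $(1+\gamma(\xi)^2)^s$, appearing in the definition of $H^s_\gamma(G)$, and $(1+\gamma(\xi)^2 e^{c\gamma(\xi)^2})^2$, appearing in the definition of $\mathcal{H}^{c,\infty}(G)$. Once the comparison is in hand, both embeddings are immediate, with no need to touch the Fourier transform beyond what is already encoded in the two norms.

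For part (1), I would fix $s\geq 0$ and establish the pointwise bound
\[
(1+\gamma(\xi)^2)^s \;\leq\; C(s)^2\,\bigl(1+\gamma(\xi)^2 e^{c\gamma(\xi)^2}\bigr)^2, \qquad \xi \in G^\wedge.
\]
Setting $t = \gamma(\xi)^2 \geq 0$, this reduces to showing that the continuous function $\varphi(t) = (1+t)^s/(1+t e^{ct})^2$ is bounded on $[0,\infty)$. Now $\varphi(0)=1$ and, because $e^{ct}$ eventually dominates any polynomial in $t$, $\varphi(t) \to 0$ as $t \to \infty$; continuity on a closed interval $[0,T]$ together with the decay beyond any large $T$ forces $\varphi$ to attain a finite maximum, which I define to be $C(s)^2$. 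Multiplying the pointwise bound by $|\widehat{f}(\xi)|^2$ and integrating over $G^\wedge$ against $\mu_{G^\wedge}$ yields $\|f\|_{H^s_\gamma(G)} \leq C(s)\|f\|_{\mathcal{H}^{c,\infty}(G)}$, which is the desired embedding.

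For part (2), I would specialize part (1) to $s=2$ to obtain $\mathcal{H}^{c,\infty}(G) \hookrightarrow H^2_\gamma(G)$. The assumption $\tfrac{1}{(1+\gamma(\cdot)^2)^2} \in L^1(G^\wedge)$ is precisely the hypothesis of Theorem 1 with $s=2$, so that theorem delivers $H^2_\gamma(G) \hookrightarrow C(G)$. Composing the two embeddings yields $\mathcal{H}^{c,\infty}(G) \hookrightarrow C(G)$, as required.

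There is no real obstacle in this argument: the essential content is the elementary observation that an exponential weight dominates any polynomial weight uniformly on $G^\wedge$, and everything else is unwinding the definitions and invoking Theorem 1. The one place where one should be mildly careful is confirming that $\varphi$ remains bounded near $t=0$, but this is automatic from continuity at $t=0$ combined with the decay as $t\to\infty$, so no separate estimate is required.
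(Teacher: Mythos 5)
Your proof is correct and follows essentially the same route as the paper: the pointwise domination of the polynomial weight $(1+\gamma(\xi)^2)^s$ by the exponential weight $\bigl(1+\gamma(\xi)^2e^{c\gamma(\xi)^2}\bigr)^2$ is exactly what the paper means by ``elementary properties of the map $x\mapsto e^x$'', and your part (2) is the paper's composition of part (1) (with $s=2$) with Theorem 1. You have simply written out the details the paper leaves implicit.
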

\begin{proof}
The first claim follows immediately from the elementary properties of the map $x \mapsto e^x$.
The second claim is a consequence of the first one combined with our Sobolev embedding result, Theorem 1.
\end{proof}

Now we show that the linear problem $L_cu = g$, $g \in L^2_{\mu_G}(G)$, can be solved completely using
our set-up:

\begin{thm}\label{linear1}
For each $c > 0$ and $g \in L^2_{\mu_G}(G)$, there exists a unique solution
$u_g \in \mathcal{H}^{c,\infty}(G)$ to the linear problem
\begin{eqnarray}\label{problem}
    L_c u = g \; .
\end{eqnarray}
Moreover, the equation
\begin{eqnarray}
   \|u_g \|_{\mathcal{H}^{c,\infty}(G)} =\| g \|_{L^2_{\mu_G}(G)} \label{for1}
\end{eqnarray}
holds.
\end{thm}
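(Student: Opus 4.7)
The plan is to reduce the equation $L_c u = g$ to an algebraic problem on the Fourier side. By Definition \ref{def1}, applying the Fourier transform to $L_c u = g$ formally yields
\begin{eqnarray*}
-\bigl(1 + \gamma(\xi)^2 e^{c \gamma(\xi)^2}\bigr)\widehat{u}(\xi) = \widehat{g}(\xi) \; ,
\end{eqnarray*}
and the multiplier $m(\xi) := 1 + \gamma(\xi)^2 e^{c\gamma(\xi)^2}$ satisfies $m(\xi) \geq 1$ pointwise. This suggests defining the candidate solution by
\begin{eqnarray*}
\widehat{u_g}(\xi) := -\frac{\widehat{g}(\xi)}{1 + \gamma(\xi)^2 e^{c\gamma(\xi)^2}} \; ,
\end{eqnarray*}
and letting $u_g = \mathcal{F}^{-1}(\widehat{u_g})$.

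For \textbf{existence}, I would first observe that since $m(\xi) \geq 1$, we have $|\widehat{u_g}(\xi)| \leq |\widehat{g}(\xi)|$; Plancherel's theorem (valid on locally compact abelian groups by Pontriagin duality) then gives $\widehat{u_g} \in L^2(G^\wedge)$, so $u_g$ is a well-defined element of $L^2_{\mu_G}(G)$. Next I would verify $u_g \in \mathcal{H}^{c,\infty}(G)$ by direct computation:
\begin{eqnarray*}
\int_{G^\wedge} \bigl(1 + \gamma(\xi)^2 e^{c\gamma(\xi)^2}\bigr)^2 |\widehat{u_g}(\xi)|^2 \, d\mu_{G^\wedge}(\xi)
= \int_{G^\wedge} |\widehat{g}(\xi)|^2 \, d\mu_{G^\wedge}(\xi) = \|g\|_{L^2_{\mu_G}(G)}^2 \; ,
\end{eqnarray*}
where the last equality is again Plancherel. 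This simultaneously proves membership in $\mathcal{H}^{c,\infty}(G)$ and the norm identity \eqref{for1}. Finally, by construction $L_c u_g = g$ (plugging $\widehat{u_g}$ into the formula for $L_c$ in Definition \ref{def1} and cancelling).

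For \textbf{uniqueness}, suppose $u_1, u_2 \in \mathcal{H}^{c,\infty}(G)$ both solve $L_c u = g$. Then $L_c(u_1 - u_2) = 0$, which on the Fourier side reads $m(\xi)(\widehat{u_1} - \widehat{u_2})(\xi) = 0$ almost everywhere. Since $m(\xi) \geq 1 > 0$, we conclude $\widehat{u_1} = \widehat{u_2}$ a.e., hence $u_1 = u_2$ in $L^2_{\mu_G}(G)$ by Fourier inversion. Equivalently, the remark preceding the theorem that $L_c$ is an isometry from $\mathcal{H}^{c,\infty}(G)$ into $L^2_{\mu_G}(G)$ immediately implies both uniqueness and \eqref{for1} once surjectivity is established.

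There is no serious obstacle here: the entire argument rests on the boundedness from below of the Fourier multiplier $m(\xi)$, which lets the inversion be carried out pointwise on $G^\wedge$ without any convergence subtleties. The only point requiring a tiny bit of care is ensuring that the formal manipulation $L_c u_g = g$ is legitimate, i.e., that $\widehat{u_g}$ lies in a domain where multiplication by $m$ is defined and $\mathcal{F}^{-1}$ applies — but this is exactly guaranteed by the norm estimate above, which shows $m \cdot \widehat{u_g} = -\widehat{g} \in L^2(G^\wedge)$.
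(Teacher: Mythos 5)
Your proposal is correct and follows essentially the same route as the paper: define $u_g$ by dividing $\widehat{g}$ by the multiplier $-\left(1+\gamma(\xi)^2 e^{c\gamma(\xi)^2}\right)$, use Plancherel to obtain the norm identity, and deduce uniqueness from the triviality of the kernel. You merely make explicit a few details the paper leaves as ``easy to see'' (the pointwise bound $m(\xi)\geq 1$ and the direct computation of the $\mathcal{H}^{c,\infty}(G)$ norm), which is fine.
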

\begin{proof}
It is easy to see that $u_g$ given by
\begin{eqnarray*}
   u_g = - \mathcal{F}^{-1}\left(\frac{\mathcal{F}(g)}{1 + \gamma(\xi)^2
   e^{c \gamma(\xi)^2}}\right) \; ,
\end{eqnarray*}
is an element of ${\mathcal H}^{c,\infty}(G)$ which
solves Equation (\ref{problem}). Now, applying Fourier transform
we get
\begin{eqnarray*}
    \left( 1 +\gamma(\xi)^2 e^{c\gamma(\xi)^2} \right)\mathcal{F}(u_g)=
    \mathcal{F}(g)\; ,
\end{eqnarray*}
and so the Plancherel Theorem implies that $\|u_g\|_{\mathcal{H}^{c,\infty}(G)} = \| g \|_{L^2_{\mu_G}(G)}$.
Equation (\ref{for1}) tells us that the operator $L_c$ has trivial kernel.
Uniqueness then follows immediately.
\end{proof}

We are ready to show that the generalized bosonic string equation (\ref{abs1}) admits continuous solutions:

\begin{thm}\label{nonli}
Assume that $G$ is a compact abelian group, and that
$\frac{1}{(1+\gamma(\cdot)^2)} \in L^{\delta}(G^\wedge)$, where $\delta >1$.  Let $U : G \times {\mathbb R}
\rightarrow {\mathbb R}$ be a function which is differentiable
with respect to its second argument, and suppose that there exist
constants $\alpha > 1$, $\beta \in [0, \alpha-1]$, $C > 0$, and  functions $h
\in L^2_{\mu_G}(G)$ and $f\in L^{\frac{2\alpha}{\alpha-1}}_{\mu_G}(G)$, such that the following two inequalities
hold:
\begin{equation} \label{condi1}
    |U(x,y) - y|\leq C(|h(x)| + |y|^{\alpha}), \quad
    \left|\frac{\partial}{\partial y}( U(x,y) - y) \right|\leq C\left(|f(x)|+|y|^{\beta}\right).
\end{equation}
If $\| h \|_{L^2_{\mu_G}(G)}$ is suitably small and $\frac{|\xi(h)-1|}{(1+\gamma(\xi)^2)^{\delta-\frac{\delta}{\alpha}}} \underset{h\rightarrow e}{\longrightarrow} 0$ uniformly
with respect to $\xi \in G^{\wedge}$, then there exists a solution $\phi \in \mathcal{H}^{c,\infty}(G)\cap C(G)$
to the nonlinear problem
\begin{eqnarray}\label{m-jeden3}
    \Delta e^{c\, \Delta}\phi - U(x,\phi)=0 \; .
\end{eqnarray}
\end{thm}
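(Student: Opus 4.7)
My plan is to recast problem (\ref{m-jeden3}) as a fixed-point equation and apply the Schauder fixed-point theorem on a closed ball in $\mathcal{H}^{c,\infty}(G)$. Theorem \ref{linear1} shows that $L_c:\mathcal{H}^{c,\infty}(G)\to L^2_{\mu_G}(G)$ is an isometric isomorphism, so (\ref{m-jeden3}) is equivalent to $\phi=T(\phi)$ with
\[
T(\phi) := L_c^{-1}\bigl(U(\cdot,\phi)-\phi\bigr).
\]
Every fixed point of $T$ lies automatically in $C(G)$: the hypothesis $\frac{1}{1+\gamma(\cdot)^2}\in L^{\delta}(G^\wedge)$ is equivalent to $\frac{1}{(1+\gamma(\cdot)^2)^\delta}\in L^1(G^\wedge)$, so Theorem 1 gives $H^\delta_\gamma(G)\hookrightarrow C(G)$, while Lemma \ref{emb}(1) supplies $\mathcal{H}^{c,\infty}(G)\hookrightarrow H^\delta_\gamma(G)$.

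The core is to choose an intermediate Sobolev index that turns (\ref{condi1}) into workable $L^2$ estimates. The first growth bound forces $\phi\in L^{2\alpha}_{\mu_G}(G)$ so that $|\phi|^\alpha\in L^2_{\mu_G}(G)$. Setting $s_0:=\delta-\delta/\alpha$, Theorem \ref{em} -- applied with integrability parameter $\delta$ and Sobolev index $s_0$ -- gives output exponent $\frac{2\delta}{\delta-s_0}=2\alpha$ and hence the continuous embedding $H^{s_0}_\gamma(G)\hookrightarrow L^{2\alpha}_{\mu_G}(G)$. For any $\varepsilon\in(0,\delta/\alpha)$ the uniform-decay hypothesis at the index $s_0$ extends unchanged to the index $s_0+\varepsilon$, since $(1+\gamma^2)^{-\varepsilon}\le 1$, so Theorem \ref{comp} yields the compact embedding $H^{s_0+\varepsilon}_\gamma(G)\hookrightarrow\hookrightarrow L^{2\alpha}_{\mu_G}(G)$. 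Composing with Lemma \ref{emb}(1) gives a continuous embedding $\mathcal{H}^{c,\infty}(G)\hookrightarrow L^{2\alpha}_{\mu_G}(G)$ with some constant $C_1$ and, simultaneously, a compact embedding $\mathcal{H}^{c,\infty}(G)\hookrightarrow\hookrightarrow L^{2\alpha}_{\mu_G}(G)$.

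With these two embeddings in hand, the three hypotheses of Schauder's theorem fall into place. Using the isometry of $L_c$ together with the first inequality of (\ref{condi1}),
\[
\|T(\phi)\|_{\mathcal{H}^{c,\infty}(G)}=\|U(\cdot,\phi)-\phi\|_{L^2_{\mu_G}(G)}\le C\bigl(\|h\|_{L^2_{\mu_G}(G)}+C_1^\alpha\|\phi\|_{\mathcal{H}^{c,\infty}(G)}^\alpha\bigr),
\]
and because $\alpha>1$ the scalar inequality $C\|h\|_{L^2_{\mu_G}(G)}+CC_1^\alpha R^\alpha\le R$ admits a solution $R>0$ provided $\|h\|_{L^2_{\mu_G}(G)}$ is small enough; this is the invariant ball $B_R\subset\mathcal{H}^{c,\infty}(G)$. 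For continuity and compactness, take $(\phi_n)\subset B_R$; the compact embedding extracts a subsequence with $\phi_n\to\phi$ in $L^{2\alpha}_{\mu_G}(G)$, and the second inequality of (\ref{condi1}) together with a mean-value identity yields
\[
|U(x,\phi_n)-U(x,\phi)|\le C\bigl(|f(x)|+|\phi_n|^\beta+|\phi|^\beta\bigr)|\phi_n-\phi|.
\]
H\"older's inequality -- using $f\in L^{2\alpha/(\alpha-1)}_{\mu_G}(G)$, the constraint $\beta\le\alpha-1$, and the $L^{2\alpha}$-bound on $\phi_n,\phi$ -- gives $U(\cdot,\phi_n)-\phi_n\to U(\cdot,\phi)-\phi$ in $L^2_{\mu_G}(G)$; the isometry $L_c^{-1}$ then upgrades this to $T(\phi_n)\to T(\phi)$ in $\mathcal{H}^{c,\infty}(G)$, proving both continuity and compactness of $T$ on $B_R$. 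Schauder's theorem now produces a fixed point $\phi\in B_R$, and by the opening remark it belongs to $\mathcal{H}^{c,\infty}(G)\cap C(G)$.

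The hardest part is the exponent bookkeeping that binds the hypotheses of the theorem: the specific power $\delta-\delta/\alpha$ in the uniform-decay condition is forced by the requirement that $\alpha^*$ in Theorem \ref{em} coincide with $2\alpha$; the H\"older exponent $2\alpha/(\alpha-1)$ demanded of $f$ is the dual of $2\alpha$; the growth restriction $\beta\le\alpha-1$ is precisely what the continuity estimate can absorb; and the smallness of $\|h\|_{L^2_{\mu_G}(G)}$ is exactly what allows $T$ to preserve some ball of $\mathcal{H}^{c,\infty}(G)$ when $\alpha>1$. Once these alignments are in place, Schauder's theorem does the remaining work.
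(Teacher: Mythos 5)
Your proposal is correct and follows essentially the same route as the paper: reformulate the equation as $L_c u = U(\cdot,u)-u$, invert $L_c$ isometrically via Theorem \ref{linear1}, exploit the chain $\mathcal{H}^{c,\infty}(G)\hookrightarrow H^{s}_\gamma(G)\hookrightarrow L^{2\alpha}_{\mu_G}(G)$ for $s$ slightly above $\delta-\delta/\alpha$ together with the compactness of Theorem \ref{comp}, and close with Schauder using the smallness of $\|h\|_{L^2_{\mu_G}(G)}$ and a mean-value estimate for continuity. The only (immaterial) difference is that you run the fixed-point argument on a ball in $\mathcal{H}^{c,\infty}(G)$, whereas the paper uses the ball $Y_\epsilon$ in $L^{2\alpha}_{\mu_G}(G)$; your exponent bookkeeping and the observation that the uniform-decay hypothesis persists for larger Sobolev indices match the paper's argument.
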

\begin{proof}
Let us  set $V(\cdot , u) = U(\cdot , u) - u$. Then, the nonlinear equation
(\ref{m-jeden3}) is formally equivalent to $L_c u = V(.,u)$, and it is easy to see that the function $V$ belongs to
$L^2_{\mu_G}(G)$. We define the set
\begin{eqnarray*}
    Y_{\epsilon} =\left\{ u \in L^{2\alpha}_{\mu_G}(G) : \|u\|_{L^{2\alpha}_{\mu_G}(G)} \leq \epsilon \right\}
\end{eqnarray*}
for $\epsilon > 0$. It is easy to see that $Y_{\epsilon}$ is a bounded, closed, convex and nonempty subset of
the Banach space $L^{2\alpha}_{\mu_G}(G)$. We define a map $\mathcal{G}$ as follows:
\begin{eqnarray*}
  \mathcal{G}: Y_{\epsilon} \rightarrow L^{2\alpha}_{\mu_G}(G), \quad \mathcal{G}(u) =\tilde{u} \; ,
\end{eqnarray*}
where $\tilde{u}$ is the unique solution to the non-homogeneous linear problem
\begin{eqnarray*}
   L_c \tilde{u} = V(\cdot , u) \; .
\end{eqnarray*}
Lemma 3 and Theorem 3 imply that $\mathcal{G}$ is well defined.
We show that there exists $\epsilon > 0$ such that $\mathcal{G}: Y_{\epsilon} \rightarrow Y_{\epsilon}\,$.
Indeed, let us take $u \in Y_{\epsilon}\,$, then we get, using (\ref{condi1}),
\begin{eqnarray}\label{wazz}
  \|\mathcal{G}(u)\|_{H^{c, \infty}(G)}^2  = \|V(\cdot,u)\|_{L^2_{\mu_G}(G)}^2 \leq
  C^2 \int_G \left| |h(x)| + |u(x)|^\alpha \right|^2 d\mu_G(x) \nonumber \\
   \leq  2 C^2 \int_G \left| |h(x)|^2 + |u(x)|^{2\alpha} \right| d\mu_G(x)
    \leq  2C^2 \left(\|h\|_{L^{2}_{\mu_G}(G)}^2 + \|u\|^{2\alpha}_{L^{2\alpha}_{\mu_G}(G)}\right).  \label{ineq2}
\end{eqnarray}
Now, let us fix $s\in(\delta-\frac{\delta}{\alpha},\delta)$. Using again Lemma \ref{emb} and Theorem \ref{em} we have
\begin{eqnarray*}
  H^{c, \infty}(G)\hookrightarrow H^s_\gamma(G) \hookrightarrow L^{2\alpha}_{\mu_G}(G).
\end{eqnarray*}
Hence, since $u \in Y_\epsilon$, the inequality (\ref{wazz}) there exists a constant $D$ such that
\[
\|\mathcal{G}(u)\|_{L^{2\alpha}(G)}^2 \leq D \left(\|h\|_{L^{2}_{\mu_G}(G)}^2 + \epsilon^{2\alpha} \right).
\]
Since we are assuming that $\|h\|_{L^2_{\mu_G}(G)}$ is suitably small and $\alpha>1$, we can find $\epsilon$ such that
$\|\mathcal{G}(u)\|_{L^{2\alpha}_{\mu_G}(G)} \leq \epsilon$. This implies that
$\mathcal{G}: Y_{\epsilon} \rightarrow Y_{\epsilon}\,$.

Now we apply a fixed point argument. We skip the details, as similar proofs appear in \cite{GPR_JDE}. First,
we note that Theorem 4 implies that $H^{s}_\gamma(G) \hookrightarrow\hookrightarrow L^{2\alpha}_{\mu_G}(G)$, and therefore the map $\mathcal{G}$ is compact. Second, a standard reasoning using the Mean Value Theorem and our assumptions on the
derivative of $V$, implies that the map $\mathcal{F}$ is continuous.
Application of Schauder's fixed point theorem finishes the proof.
\end{proof}

\subsection*{Acknowledgements}
We are most grateful to Professor H. G. Feichtinger for his comments on a first version of
this paper, and for directing us to references \cite{Fe,FG,FeiWe}.
P. G\'orka is partially supported by FONDECYT grant \#3100019;
E.G. Reyes is partially supported by FONDECYT grants \#1070191 and  \#1111042.

\end{document}